\newcommand{\cites}[1]{\citeauthor{#1}'s (\citeyear{#1})}
\newtheorem*{lemma*}{Lemma}
\newtheorem*{example}{Example}
\newcommand{\E}{\mathbf{E}}
\newcommand{\Var}{\mathrm{Var}}
\newcommand{\1}{\mathbf{1}}
\DeclareMathDelimiter{\Norm}{\mathopen}{largesymbols}{"42}{largesymbols}{"42}
\DeclareMathOperator*{\argmax}{argmax}
\newcommand{\indep}{\rotatebox[origin=c]{90}{$\models$}}
\begin{document}

\title{Estimating the Variance of Measurement Errors in Running Variables of Sharp Regression Discontinuity Designs}
%\author{Kota Mori \thanks{kmori05@gmail.com}}
\author{Kota Mori \thanks{kmori05@gmail.com}}

\begin{titlepage}

% this makes the title, author, date, etc
\maketitle

\begin{abstract}
\noindent
Estimation of a treatment effect by a regression discontinuity design faces a severe challenge when the running variable contains measurement errors since the errors smoothen the discontinuity on which the identification depends.
The existing studies show that the variance of the measurement errors plays a vital role in both bias correction and identification under such situations.
%However, little is known about how one can recover the unknown variance from data.
However, the methodologies to estimate the variance from data are relatively undeveloped.
This paper proposes two estimators for the variance of measurement errors of running variables of sharp regression continuity designs.
The proposed estimators can be constructed merely from data of the observed running variable and treatment assignment, and do not require any other external source of information.
\end{abstract}
\thispagestyle{empty}
\end{titlepage}

\section{Introduction}

Regression discontinuity design (RDD) is a frequently-used framework for estimating the causal effect of a binary treatment variable on an outcome measurement.
%It has gained its popularity, particularly in social science disciplines, where 
%randomized control experiments are often difficult to conduct despite the needs for 
%causal inference both from an academic and practical standpoint.
An RDD depends on a critical assumption that there exists a variable such that the treatment is assigned if and only if that variable exceeds a known threshold.
A variable with this property is called a running variable.
%Among the examples are financial aids that is only eligible for households that earns 
%less that a certain cutoff income level (XXX), 
%kids who join the younger class if born on or 
%after the first day of a school year (XXX), 
%and legislators who enjoy incumbent status if they 
%gain a majority of votes \citep{Lee2008a}.
%
Given an RDD framework, one compares the treated and untreated samples around the threshold of the running variable.
Assuming that other covariates are continuously distributed at that point, those slightly above the threshold and those slightly below are arguably similar except that only the former receives the treatment. 
Therefore the difference in the outcome measurement between the two is attributable to the impact of treatment.
%It is this intuitive identification strategy that attracted many scholars who engage in
%this estimation methodology.

Identification using an RDD faces a challenge when the observed running variable contains measurement errors.
Theoretically, even a small magnitude of measurement error would nullify the estimation of the treatment effect leveraging an RDD.
This is because the measurement errors smooth out the discontinuity of the assignment at the threshold, which breaks the RDD assumption.
%Figure (YYY) visualizes this point.
Note that an RDD with a mismeasured running variable does not form a fuzzy RDD;
a fuzzy RDD assumes that the assignment probability is discontinuous at a threshold, while measurement errors of the running variable smoothen the discontinuity.

\citet{Davezies2014} showed that the standard local polynomial regression yields a biased estimate for the treatment effect if the running variable is mismeasured.
They then proposed an alternative estimator that is less susceptible to the measurement errors and examined the magnitude of the bias.
\citet{Yanagi2014} also studied a similar estimator and proposed a method to alleviate the bias of the estimator.
Finally, \citet{Pei2017} proposed a series of identification strategies that overcome measurement errors in the running variable.  

These studies agree that the variance of the measurement errors plays an essential role in the bias correction as well as in the identification of the treatment effect.
The analysis by \citet{Davezies2014} shows that their estimator would be more biased when the running variable contains measurement errors of a larger magnitude.
\cites{Yanagi2014} bias correction approach requires that the variance be known from an external source.
One of the estimators proposed by \citet{Pei2017} also utilizes external knowledge of the variance (see Approach 3 in \S 4.1).
Despite its utility, only a handful of discussions have been devoted to how one can obtain or estimate the variance of the measurement errors.
\citet{Yanagi2014} suggests that the variance can be estimated using auxiliary data that provide the accurate distribution of the running variable (but are not tied with the treatment assignment).
If such data are available, the variance of the measurement errors can be estimated by subtracting the {\em true} variance of the running variable from the variance of the mismeasured running variable.
Such auxiliary data, however, might not be available in many applications.

This paper proposes two estimators for the variance of the measurement errors.  
Both estimators do not require any additional source of information;
the estimation only requires data of the observed running variable and treatment assignment, which are naturally available in virtually all RDD studies. 
The first estimator assumes that both the running variable and the measurement error follow the Gaussian distribution.
Under this assumption, the conditional likelihood function has an analytic formula, which can be optimized efficiently by standard numerical methods.
The second estimator relaxes the Gaussian assumption and allows both the running variable and measurement error to follow arbitrary distributions characterized by a finite number of parameters.  
Unlike the Gaussian case, the likelihood function under this assumption cannot be expressed by a simple formula, where direct optimization becomes numerically unstable.
Instead, the likelihood can be maximized by a variant of the expectation-maximization (EM) algorithm, which is computationally efficient and robust.
%Instead, a variant of the expectation-maximization algorithm is applicable, which maximizes the likelihood function efficiently.

The result of the simulation experiments are also reported.
All estimators successfully recover the true variance when the model assumption matches the data generation process.  
The estimators exhibit different degrees of robustness against misspecification.
The methods have been implemented as a library for the R language \citep{RCoreTeam2017} and are freely available on the GitHub repository (\url{https://github.com/kota7/rddsigma}).

\section{Model}

Let $D \in \{0,1\}$ denote the binary variable that indicates the assignment of treatment and $X \in \mathbf{R}$ the running variable for $D$. 
Suppose that $X$ and $D$ form a sharp regression discontinuity design, {\it i.e.}, $D = \1\{X>c\}$, with a known constant $c$.

Assume that $X$ is only observed with an additive error:
\[ W = X + U, \;\;\;\;\;\;  X \indep U \]
where $W$ is the observed running variable, for which data are available.
We assume that $U$ is continuous, has a zero mean and a finite variance $\sigma^2$.
Our goal is to estimate $\sigma$ using a random sample of $\{w_i, d_i\}_{i=1}^{n}$, where $w_i$ and $d_i$ represent the observations corresponding to $W$ and $D$ respectively.

\subsection{Gaussian-Gaussian Case}

Consider a case where both $X$ and $U$ follow the Gaussian distribution.
The independence assumption of the two implies that they follow the multivariate Gaussian distribution.
Therefore, the sum of the two, $W$, is also Gaussian.

Let $\E(X) = \mu_x$ and $\Var(X) = \sigma_{x}^{2}$.  Then, 
$\E(W) = \mu_x$ and 
$\Var(W) = \sigma_{x}^{2} + \sigma^2 \equiv \sigma_{w}^{2}$.
By the property of the multivariate Gaussian distribution \citep[see \it{e.g.},][\S 2.3.1]{Bishop2006}, the conditional distribution of $U$ given $W$ is also Gaussian and its parameters can be explicitly written as follows: 
\begin{equation}
\E(U | W) = \mu_{u|w} = \frac{\sigma^2}{\sigma_{w}^{2}}(W - \mu_x) 
\label{eq:condmean}
\end{equation}
and 
\begin{equation}
 \Var(U | W) = \sigma_{u|w}^2 =
   \left(1-\frac{\sigma^2}{\sigma_{w}^{2}} \right)\sigma^2. 
\label{eq:condvar}
\end{equation}
We can construct the conditional likelihood function using \eqref{eq:condmean} and \eqref{eq:condvar}.
Consider $p(D | W; \theta)$, that is, the conditional distribution of $D$ given $W$, where $\theta = (\mu_x, \sigma_w, \sigma)$.
Since $D = \1\{X>c\} = \1\{U < W - c\}$, we have
\[ p(D| W; \theta) = 
  \begin{cases}
      \Phi((W - c - \mu_{u|w})/\sigma_{u|w})  & \text{if $D = 1$}  \\
      1 - \Phi((W - c - \mu_{u|w})/\sigma_{u|w})  & \text{if $D = 0$}
     %\mathcal{N}(W - c; \mu_{u|w}, \sigma_{u|w})  & \text{if $D = 1$}  \\
     %1 - \mathcal{N}(W - c; \mu_{u|w}, \sigma_{u|w})  & \text{if $D = 0$}     
  \end{cases}
\]
%where $\mathcal{N}$ is the cumulative distribution function of the Gaussian distribution.
where $\Phi$ is the cumulative distribution function of the standard Gaussian distribution.

Although the likelihood function depends on three parameters, $(\mu_x, \sigma_w, \sigma)$, the first two can be estimated separately by the sample mean and standard deviation of $W$.
We can substitute these estimates into the likelihood function, and estimate $\sigma$ by the maximum likelihood.
Notice that this estimation process is a two-step maximum likelihood, and hence the variance of estimators needs to be adjusted appropriately \citep{Murphy1985, Newey1994}.

\subsection{Non-Gaussian Case}

In this section, we relax the Gaussian assumption in the previous section.
Assume instead that $X$ and $U$ follow some parametric distributions characterized by a finite number of parameters.
Unlike the Gaussian case, we do not have an explicit expression for the conditional likelihood under this assumption in general.
Instead, we consider the estimation using the marginal likelihood function.

Let $p_x$ and $p_u$ denote the probability density functions of $X$ and $U$ and suppose that they depend on parameters $\theta_x$ and $\theta_u$ respectively.
We can write the full likelihood function for a pair $(W, D)$ as 
\[
\begin{split}
\log p(W, D; \theta) &= 
D \log \int_{c}^{\infty} p_x(x; \theta_x) p_u(W-x; \theta_u) dx\\ +
&(1-D) \log \int_{-\infty}^{c} p_x(x; \theta_x) p_u(W-x; \theta_u) dx
\end{split}
\]
%\begin{align}
%\log p(W, D=1; \theta) &= 
%\log \int_{c}^{\infty} p_x(x; \theta_x) p_u(W-x; \theta_u) dx 
%\\
%\log p(W, D=0; \theta) &= 
%\log \int_{-\infty}^{c} p_x(x; \theta_x) p_u(W-x; \theta_u) dx
%\end{align}
Our objective is to maximize the sum of log-likelihood with respect to the parameters:
\[
\hat{\theta} \equiv \argmax_\theta \sum_{i=1}^{n} \log p(w_i, d_i; \theta)
\]

Due to the complex expressions inside integrals, the direct maximization of this objective function by numerical routines tends to be computationally demanding and unstable.
Instead, we employ a variant of the expectation-maximization (EM) algorithm, which turns out to be computationally more efficient and robust.
Define the Q-function as below.
\[
\begin{split}
Q(\theta, \theta' | W, D) &= 
D \int_{c}^{\infty} h(\theta' | Z, D) 
  \left( \log p_x(x; \theta_x) + \log p_u(W-x; \theta_u) \right) dx\\ +
&(1-D) \int_{\infty}^{c} h(\theta' | Z, D) 
  \left( \log p_x(x; \theta_x) + \log p_u(W-x; \theta_u) \right) dx 
\end{split}
\]
%\begin{align}
%Q(\theta, \theta' | W, D=1) &= 
% \int_{c}^{\infty} h(\theta', x | W, D) 
%  \left[ \log p_x(x; \theta_x) + \log p_u(W-x; \theta_u) \right] dx \\
%Q(\theta, \theta' | W, D=0) &= 
% \int_{\infty}^{c} h(\theta', x | W, D) 
%  \left[ \log p_x(x; \theta_x) + \log p_u(W-x; \theta_u) \right] dx 
%\end{align}

where the function $h$ is defined as
\begin{align}
h(\theta, x | W, D=1) &= 
  \frac{p_x(x; \theta_x) p_u(W-x; \theta_u)}
  {\int_{c}^{\infty} p_x(x; \theta_x) p_u(W-x; \theta_u) dx} \label{eq:h1}\\
h(\theta, x | W, D=0) &= 
  \frac{p_x(x; \theta_x) p_u(W-x; \theta_u)}
  {\int_{\infty}^{c} p_x(x; \theta_x) p_u(W-x; \theta_u) dx}.  \label{eq:h2}
\end{align}

We can show that, for any $(W, D)$ and $\theta, \theta'$, 
\begin{equation} \label{eq:ineq}
\log p(W, D; \theta) - \log p(W, D; \theta') \ge
Q(\theta, \theta' | W, D) - Q(\theta', \theta' | W, D).
\end{equation}
See the Appendix \ref{sec:proof} for the proof of this inequality.

The inequality \eqref{eq:ineq} motivates a variant of the EM algorithm where the parameters are updated so as to maximize the sum of the Q-functions:  
\begin{equation}
\theta^{(t+1)} \leftarrow \argmax_\theta \sum_{i=1}^{n} Q(\theta, \theta^{(t)} | w_i, d_i), \label{eq:maxstep}
\end{equation}
where $\theta^{(0)}$ is initialized outside the loop.
By the inequality \eqref{eq:ineq}, the objective function increases monotonically along with the iterations, and hence converges to a local maximum provided that it is bounded. 
Note that, since the algorithm only ensures the convergence to a local maximum, the outcome may vary by choice of the initial value, $\theta^{(0)}$.

Iterative maximization of the Q-function tends to be computationally more efficient and stable than maximizing the likelihood function directly.
%This is particularly the case for the distributions for which the maximum likelihood estimator of the parameters has an explicit formula.
%In particular, for many distributions that have closed formulas for the maximum likelihood parameter estimates, the maximization step \eqref{eq:maxstep} does not require numerical optimization routines.
In particular, for the distributions such that the maximum likelihood parameter estimator is analytically solvable, the update \eqref{eq:maxstep} also has a closed-form expression.
We illustrate a case where $X$ follows the Gaussian distribution and $U$ the Laplace distribution below.

\begin{example}

Suppose $X$ follows the Gaussian distribution and $U$ the Laplace distribution, {\it i.e.},

\begin{align*}
p_x(x; \mu_x, \sigma_x) &= \frac{1}{\sqrt{2\pi\sigma_{x}^{2}}} \exp\left(-\frac{(x-\mu_x)^2}{2\sigma_{x}^{2}}\right)\\ 
p_u(w; \sigma) &= \frac{\sqrt{2}}{2 \sigma}
                 \exp\left(-\frac{\sqrt{2}|u|}{\sigma}\right).
\end{align*}
Note that $\Var(U) = \sigma^2$.

We have three parameters to estimate, $\mu_x$, $\sigma_x$, and $\sigma$.
Since $\mu_x$ can be estimated by the sample average of $W$, we estimate the two standard deviations by the algorithm presented.
The Q-function is written as follows.  
\[
\begin{split}
Q(\theta, \theta'| w_i, d_i) &= %\sum_{i=1}^{n} \Bigg\{
 d_i \int_{c}^{\infty} h_i(\theta') 
  \left[ \log p_x(x; \sigma_x) + \log p_u(w_i-x; \sigma) \right] dx\\ +
& (1-d_i) \int_{-\infty}^{c} h_i(\theta') 
  \left[ \log p_x(x; \sigma_x) + \log p_u(w_i-x; \sigma) \right] dx.
%\Bigg\} 
\end{split}
\] 
Note that the $h$ function is obtained by substituting the density functions to \eqref{eq:h1} and \eqref{eq:h2}.
Setting $\sum_{i=1}^{n}\frac{\partial Q(\theta, \theta'| w_i, d_i) }{ \partial \theta }= 0$ yields the first order conditions for the parameters:
\begin{align}
\sigma &= \frac{\sqrt{2}}{n}
\sum_{i=1}^{n} \Bigg\{
d_i \int_{c}^{\infty} h_i(\theta') |w_i-x| dx +
(1-d_i) \int_{-\infty}^{c} h_i(\theta') |w_i-x| dx 
\Bigg\} 
\label{eq:focsigma} \\
\sigma_{x}^{2} &= \frac{1}{n}
\sum_{i=1}^{n} \Bigg\{
d_i \int_{c}^{\infty} h_i(\theta') (x-\mu_x)^2 dx +
(1-d_i) \int_{-\infty}^{c} h_i(\theta') (x-\mu_x)^2 dx 
\Bigg\}.
\label{eq:focsigmax}
\end{align}

The expressions inside the integral are the weighted average of $|w_i-x|$ and $(x-\mu_x)^2$ respectively (with $h_i(\theta')$ as weights), as analogous to the variance estimator for the Laplace and the Gaussian distributions.
Thanks to the explicit formulas \eqref{eq:focsigma} and \eqref{eq:focsigmax}, the parameters can be updated at each iteration without relying on a numerical optimization routine.
This reduces the computation time and enhances the stability of the algorithm.
Analogous formulas can be obtained for many other parametric distributions, particularly for those belonging to the exponential family.
\end{example}

%\section{Nonparametric Estimation}
%
%In this section, we relax the parametric assumption on $X$, while we 
%maintain that the distribution of $U$ is known up to parameters.
%If the distribution is $U$ is fully known, then we can recover the 
%distribution of $X$ using the deconvolution technique.
%We can also recover the distribution of $W$ by nonparametric density estimation.
%Combining the distributions of $X$, $U$ and $W$, yields the conditional
%distribution of $X$ given $W$, using which, we can construct the conditional
%likelihood function for $D$ given $W$.
%
%Let $\hat{p}_w$ be the kernel density estimation for $W$.
%Suppose the parameters governing the distribution of $U$ are known. 
%Then the density of $X$, $\hat{p}_x$, can be estimated by deconvolution technique \citep{Wang2015}.  
%Using this, we can estimate the conditional distribution of $X$ given $W$:
%\[
%\hat{p}(x | w; \theta) = \frac{\hat{p}_x(x) p_u(w-x; \theta) }{\hat{p}_w(w)}
%\]
%We use this to define the conditional likelihood function.
%\[
%\hat{p}(d|w) = d \int_{c}^{\infty} \hat{p}(x | w; \theta) dx +
%   (1-d) \int_{-\infty}^{c} \hat{p}(x | w; \theta) dx
%\]

\section{Simulation}

This section reports the result of the simulation experiments of the estimators introduced in the previous section.
The methods have been implemented as an R library and are freely available on the GitHub repository (\url{https://github.com/kota7/rddsigma}).

We generate data from various combinations of distributions to examine the robustness of the estimators against misspecification.
$X$ has been generated from the Gaussian distribution and the exponential distribution, while $U$ has been generated from the Gaussian and Laplace distribution.  
For each pair of distributions, we set the variance of $X$ to one, and the variance of $U$, $\sigma$, to 0.2 and 1.2.
The sample size is 500, and the cutoff point $c$ is set to one for all cases.
As a result, we have eight simulation configurations, as summarized in Table \ref{tab:setup}.

\begin{table}[ht] 
\begin{center}
\caption{Simulation setup}
\label{tab:setup}
\begin{tabular}{c|ccccccc}
\hline
ID & N & $c$ & $p_x$ & $p_u$ & $\E(X)$ & $\Var(X)$ & $\Var(U)$ \\
\hline\hline 
1 & 500 & 1 & Gaussian & Gaussian & 0 & 1 & 0.2 \\
2 & 500 & 1 & Gaussian & Laplace & 0 & 1 & 0.2 \\
3 & 500 & 1 & Exponential & Gaussian & 1 & 1 & 0.2 \\
4 & 500 & 1 & Exponential & Laplace & 1 & 1 & 0.2 \\

5 & 500 & 1 & Gaussian & Gaussian & 0 & 1 & 1.2 \\
6 & 500 & 1 & Gaussian & Laplace & 0 & 1 & 1.2 \\
7 & 500 & 1 & Exponential & Gaussian & 1 & 1 & 1.2 \\
8 & 500 & 1 & Exponential & Laplace & 1 & 1 & 1.2 \\
\hline
\end{tabular}
\end{center}
\end{table}

For each setup, we generate 200 random datasets.
Using a generated dataset, we estimate $\sigma$ and other parameters by three methods:
(A) Gaussian-Gaussian estimator, (B) non-Gaussian estimator with $X$ and $U$ following the Gaussian distribution, and (C) non-Gaussian estimator with $X$ following the Gaussian and $U$ following the Laplace distribution.
Notice that for many cases the models are ``misspecified'' in a sense that the true data generating process does not follow the distributions assumed by the model.  This allows us to examine the robustness of the estimators against the deviation from the assumptions.

The results are summarized in Figure \ref{fig:sim}. The numbers in the horizontal axis correspond to the IDs given in Table \ref{tab:setup} and each panel corresponds to an estimation method.
The Gaussian-Gaussian estimator, labeled as (A), consistently recovers the true parameter for all cases.  IDs 1 and 5 satisfy the model assumptions and estimated $\sigma$ distributes around the true parameters as expected.  Even for other cases where the model is misspecified, the estimates are centered around the true parameter.

The estimator (B), the non-Gaussian estimator with the assumption that $X$ and $U$ follow the Gaussian distribution, also estimates the parameters correctly in most cases.  It tends to be, however, unstable for the setups 3 and 7, where the distribution of $X$ is generated from the exponential distribution.  

The estimator (C), the non-Gaussian estimator with the assumption that $X$ follows the Gaussian and $U$ follows the Laplace distribution, performs well for IDs 2 and 6, which satisfy the model assumptions.
However, it exhibits a relatively high sensitivity to misspecification compared with the other two methods.  
Instability is particularly prominent for the cases where $X$ is generated from the exponential distribution.  
 
\begin{figure}[ht] 
\begin{center}
\includegraphics[width=0.95\hsize]{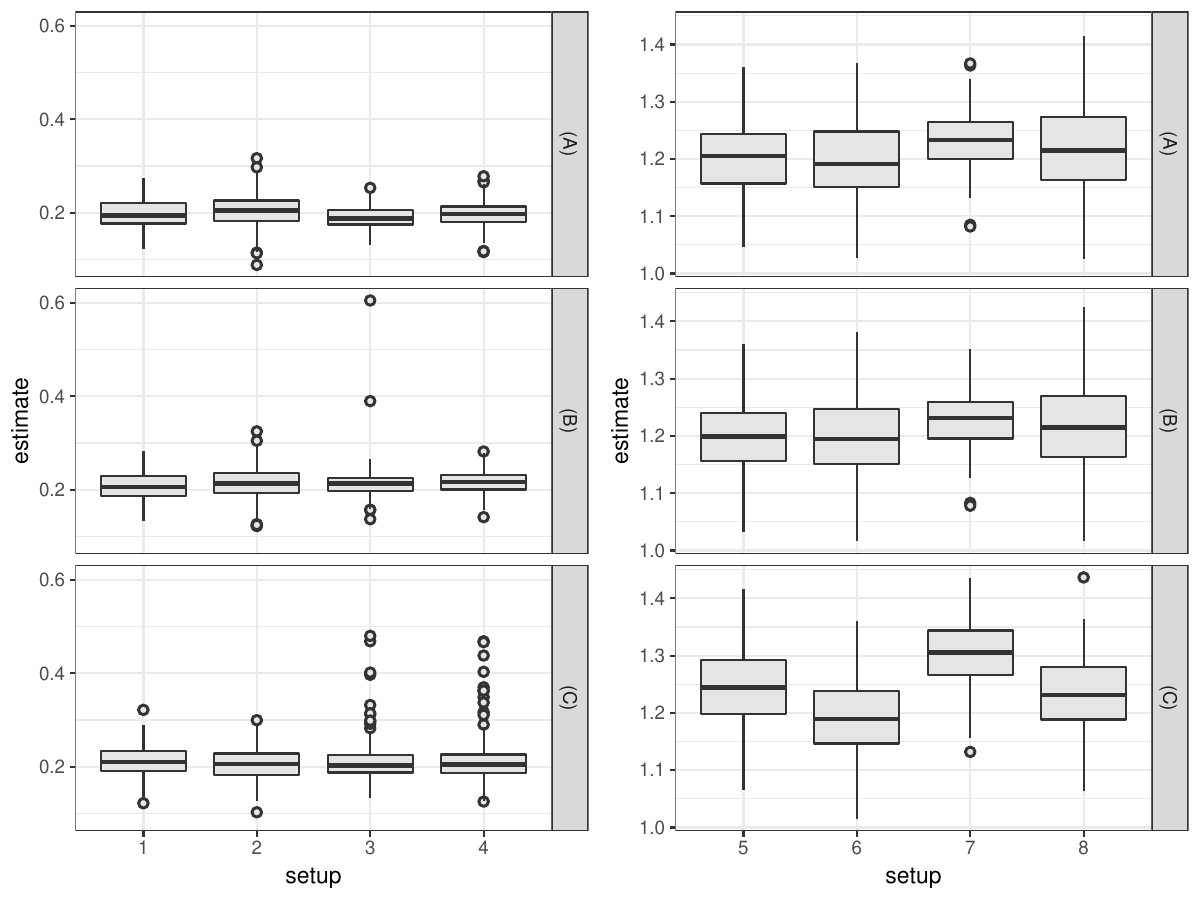}
\caption{Distribution of estimated $\sigma$.  Each boxplot comprises 200 independent estimates.  The numbers in the horizontal axis indicate the IDs of the data generating process given in Table \ref{tab:setup}. True parameters are 0.2 in the panels in the left column and 1.2 in the right. Each row uses a different estimation method: (A) Gaussian-Gaussian estimator, (B) non-Gaussian estimator with $X$ and $U$ following the Gaussian distribution, and (C) non-Gaussian estimator with $X$ following the Gaussian and $U$ following the Laplace distribution.}
\label{fig:sim}
\end{center}
\end{figure}

\section{Concluding Remarks}

This paper introduces two estimators for estimating the variance of measurement errors in running variables of sharp regression discontinuity designs.
The first estimator is constructed under the assumption that both the running variable and the measurement error follow the Gaussian distribution.
Under this assumption, the conditional likelihood function has an explicit formula, and the parameters can be estimated efficiently using a numerical optimization routine.
Despite the strong assumptions on the variable distributions, the estimator exhibits robustness against misspecification in the simulation exercises.

The second estimator relaxes the Gaussian assumption and allows both $X$ and $U$ to follow arbitrary distributions characterized by a finite number of parameters.
A variant of the expectation-maximization (EM) algorithm is introduced, which optimizes the likelihood function efficiently compared with a direct application of a standard numerical optimization routine.
This estimator performs as well as the first when the model is correctly specified.
However, the simulation experiments find that the estimator can become unstable and biased when the model assumptions deviates from the data generating process.

The first estimator would be practical in many cases for estimating the variance of measurement errors. 
It is easy to implement, is computationally efficient, and tends to be robust against misspecification. 
The second estimator can be preferred in domains where the distributions of the variables are understood well.
It would also serve as a robustness check for the first estimator.

%\section{place holders}

%\citet{Bartalotti2016}
%\citet{Schwarz2010}
%\citet{Pei2016}
%\citet{Pei2017}
%\citet{Newey1994}
%\citet{Murphy1985}
%\citet{Lee2008a}
%\citet{Imbens2012}
%\citet{Yanagi2014}
%\citet{Imbens2008}
%\citet{Lee2010}
%\citet{Dimmery2013}
%\citet{Davezies2014}

\appendix
\appendix

\section{Proof} \label{sec:proof}

We provide a proof for the inequality \eqref{eq:ineq}:
\[
\log p(W, D; \theta) - \log p(W, D; \theta') \ge
Q(\theta, \theta' | W, D) - Q(\theta', \theta' | W, D).
\]
To do so, we introduce the following lemma.

\begin{lemma*}
Let $J(\theta) = \log \int_{x \in \mathcal{X}} g(x; \theta) dx$, where $g$ is a positive-valued function and $\mathcal{X}$ is a subset of the range of $g$.  Define the corresponding Q-function by
\[ Q(\theta, \theta') = \int_{x \in \mathcal{X}} h(x; \theta') \log g(x; \theta) \]
where 
\[ h(x; \theta) = \frac{g(x; \theta)}{\int_{y \in \mathcal{X}} g(y; \theta) dy}. \]
Then, 
\[
\log J(\theta) - \log J(\theta') \ge Q(\theta,\theta') - Q(\theta', \theta').
\]
\end{lemma*}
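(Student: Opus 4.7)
The plan is to use the classical EM lower bound, which is just Jensen's inequality applied to the concave logarithm. The key observation is that $h(\cdot;\theta')$ as defined is a probability density on $\mathcal{X}$: it is nonnegative, and by construction its integral over $\mathcal{X}$ is one. This makes it a legitimate weighting against which Jensen's inequality can be deployed.

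First, I would rewrite $\int_{\mathcal{X}} g(x;\theta)\,dx$ as $\int_{\mathcal{X}} h(x;\theta')\,[g(x;\theta)/h(x;\theta')]\,dx$. Taking the logarithm and pushing it inside the integral via Jensen's inequality yields $J(\theta) \ge \int_{\mathcal{X}} h(x;\theta')\log[g(x;\theta)/h(x;\theta')]\,dx = Q(\theta,\theta') - H(\theta')$, where $H(\theta') := \int_{\mathcal{X}} h(x;\theta')\log h(x;\theta')\,dx$ is the entropy-like term that always appears in this construction.

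Next, I would evaluate the same identity at $\theta=\theta'$. In that case the ratio $g(x;\theta')/h(x;\theta')$ is constant in $x$ (it is simply the normalizer $\int_{\mathcal{X}} g(x;\theta')\,dx = e^{J(\theta')}$), so Jensen is tight and we recover the \emph{equality} $J(\theta') = Q(\theta',\theta') - H(\theta')$. Subtracting the two relations cancels the entropy term and gives $J(\theta) - J(\theta') \ge Q(\theta,\theta') - Q(\theta',\theta')$, which is the conclusion one wants (reading the ``$\log J$'' in the printed statement as a typo for $J$, since $J$ is already defined as a logarithm; this is also what makes the shape match \eqref{eq:ineq}).

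There is no real obstacle here beyond recognizing the probability-density structure of $h$ and invoking Jensen; the only care needed is in the application back to \eqref{eq:ineq}, where the lemma is used with $g(x;\theta)=p_x(x;\theta_x)p_u(W-x;\theta_u)$ and with $\mathcal{X}=[c,\infty)$ when $D=1$ or $\mathcal{X}=(-\infty,c]$ when $D=0$. The $D$ and $1-D$ factors in the full Q-function then simply select which of the two instances of the lemma to invoke, and adding the resulting inequalities across the two cases produces \eqref{eq:ineq}.
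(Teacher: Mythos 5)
Your proof is correct and is essentially the paper's own argument: the paper derives the exact identity $J(\theta)-Q(\theta,\theta') = -\int_{x\in\mathcal{X}} h(x;\theta')\log h(x;\theta)\,dx$, subtracts the $\theta=\theta'$ case, and concludes with Gibbs's inequality, which is the same Jensen/entropy-cancellation step you invoke up front (and the paper likewise treats the spurious ``$\log J$'' as $J$, exactly as you do). Your account of how the lemma is applied on $(c,\infty)$ and $(-\infty,c)$ to obtain \eqref{eq:ineq} also matches the appendix.
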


\begin{proof}
\begin{align*}
& \log J(\theta) - Q(\theta, \theta')\\ 
=& \log \int_{x \in \mathcal{X}} g(x; \theta)dx - \int_{x \in \mathcal{X}} h(x; \theta') \log g(x; \theta)dx \\
=& \int_{y \in \mathcal{X}} h(y; \theta') \log \int_{x \in \mathcal{X}} g(x; \theta)dx dy - \int_{x \in \mathcal{X}} h(x; \theta') \log g(x; \theta)dx \\
=& \int_{y \in \mathcal{X}} h(y; \theta') \left( \log \int_{x \in \mathcal{X}} g(x; \theta)dx - \log g(y; \theta) \right) dy \\
=&  \int_{y \in \mathcal{X}} h(y; \theta') \log \frac{\int_{x \in \mathcal{X}} g(x; \theta)dx}{g(y; \theta)} dy \\
=& - \int_{y \in \mathcal{X}} h(y; \theta') \log h(y; \theta) dy \\
=& - \int_{x \in \mathcal{X}} h(x; \theta') \log h(x; \theta) dy.
\end{align*}
Construct the same equality with $\theta=\theta'$ and subtract from the both sides, then
\begin{align*}
& \log J(\theta) - \log J(\theta') - Q(\theta, \theta') + Q(\theta', \theta')  \\
=& \int_{x \in \mathcal{X}} h(x; \theta') \log \frac{h(x; \theta')}{h(x; \theta)} \\
\ge& 0
\end{align*}
where the last line is due to the Gibb's inequality.
Hence,
\[
\log J(\theta) - \log J(\theta') \ge Q(\theta,\theta') - Q(\theta', \theta').
\]
\end{proof}

To derive the inequality \eqref{eq:ineq}, apply the lemma with $g(x; \theta) = p_x(x; \theta_x)p_u(W-x; \theta_u)$ and $\mathcal{X} = (c, \infty)$.  Then, we obtain 
\begin{equation} \label{eq:d1}
 \log \int_{c}^{\infty} p_x(x; \theta_x)p_u(W-x; \theta_u) \ge \int_{c}^{\infty} h(\theta' | Z, D) 
  \left( \log p_x(x; \theta_x) + \log p_u(W-x; \theta_u) \right) dx.
\end{equation}
Similarly, applying the lemma with the same $g$ function and $\mathcal{X} = (-\infty, c)$, 
\begin{equation} \label{eq:d0}
\log \int_{-\infty}^{c} p_x(x; \theta_x)p_u(W-x; \theta_u) \ge \int_{-\infty}^{c} h(\theta' | Z, D) 
  \left( \log p_x(x; \theta_x) + \log p_u(W-x; \theta_u) \right) dx.
\end{equation}
\eqref{eq:d1} and \eqref{eq:d0} imply \eqref{eq:ineq}.

\bibliography{rdd-me.bib}

\begin{thebibliography}{7}
\providecommand{\natexlab}[1]{#1}

\bibitem[{Bishop(2006)}]{Bishop2006}
\textsc{Bishop, C.~M.} (2006). \textit{Pattern Recognition and Machine
  Learning}. Newyork, NY: Springer.

\bibitem[{Davezies and Le~Barbanchon(2014)}]{Davezies2014}
\textsc{Davezies, L.} and \textsc{Le~Barbanchon, T.} (2014). Regression
  discontinuity design with continuous measurement error in the running
  variable, {W}orking Paper 2014-27, Centre de Recherche en Economie et
  Statistique.

\bibitem[{Murphy and Topel(1985)}]{Murphy1985}
\textsc{Murphy, K.~M.} and \textsc{Topel, R.~H.} (1985). {Estimation and
  Inference in Two-Step Econometric Models}. \textit{Journal of Business {\&}
  Economic Statistics}, \textbf{3}~(4), 370--379.

\bibitem[{Newey and McFadden(1994)}]{Newey1994}
\textsc{Newey, W.~K.} and \textsc{McFadden, D.} (1994). {Large Sample
  Estimation and Hypothesis Testing}. In R.~Engle and D.~McFadden (eds.),
  \textit{Handbook of Econometrics}, vol.~4, ~\textit{36}, Elsevier, pp.
  2111--2245.

\bibitem[{Pei and Shen(2017)}]{Pei2017}
\textsc{Pei, Z.} and \textsc{Shen, Y.} (2017). The devil is in the tails:
  Regression discontinuity design with measurement error in the assignment
  variable. In M.~D. Cattaneo and J.~C. Escanciano (eds.), \textit{Regression
  Discontinuity Designs}, ~\textit{38}, Emerald Publishing Limited, pp.
  455--502.

\bibitem[{{R Core Team}(2017)}]{RCoreTeam2017}
\textsc{{R Core Team}} (2017). \textit{R: A Language and Environment for
  Statistical Computing}. R Foundation for Statistical Computing, Vienna,
  Austria.

\bibitem[{Yanagi(2014)}]{Yanagi2014}
\textsc{Yanagi, T.} (2014). {The Effect of Measurement Error in the Sharp
  Regression Discontinuity Design}, {KIER} Discussion Paper Series No. 910.

\end{thebibliography}
\end{document}